\documentclass{article}

\usepackage{microtype}
\usepackage{graphicx}
\usepackage{subcaption}
\usepackage{booktabs} 

\usepackage{xurl}
\usepackage{hyperref}

\DeclareUnicodeCharacter{03F5}{\ensuremath{\epsilon}}

\usepackage{iclr2026_conference,times}

\usepackage{amsmath}
\usepackage{amssymb}
\usepackage{mathtools}
\usepackage{amsthm}

\usepackage[capitalize,noabbrev]{cleveref}

\theoremstyle{plain}
\newtheorem{theorem}{Theorem}[section]

\theoremstyle{definition}
\newtheorem{definition}[theorem]{Definition}

\theoremstyle{remark}

\usepackage[textsize=tiny]{todonotes}

\title{$\fhaim$: \textbf{F}ully \textbf{H}omomorphic \textbf{AIM} for Private Synthetic Data Generation}

\iclrfinalcopy

\author{%
Mayank Kumar$^{\dagger}$
Qian Lou$^{\dagger}$
Paulo L. Barreto$^{\ddagger}$
Martine De Cock$^{\ddagger}$
Sikha Pentyala$^{\ddagger}$\\
$^{\dagger}$University of Central Florida \qquad
$^{\ddagger}$University of Washington Tacoma\\
\texttt{\{mayank.kumar,qian.lou\}@ucf.edu, \{pbarreto,mdecock,sikha\}@uw.edu}
}

\usepackage{natbib}
\usepackage{amsmath,amssymb,amsfonts}
\usepackage{algorithmic}
\usepackage{graphicx}
\usepackage{textcomp}
\usepackage{xcolor}
\def\BibTeX{{\rm B\kern-.05em{\sc i\kern-.025em b}\kern-.08em
    T\kern-.1667em\lower.7ex\hbox{E}\kern-.125emX}}

\usepackage{hyperref}
\usepackage{url}
\usepackage{icomma}
\usepackage{soul}
\usepackage{subcaption}
\usepackage{graphicx}
\usepackage{multirow}
\usepackage{multicol}
\usepackage{caption}
\usepackage{tabularx}
\usepackage{algorithm}
\usepackage{algorithmic}

\usepackage{amsmath}
\usepackage{amssymb}
\usepackage{mathtools}
\usepackage{amsthm}
\usepackage{wrapfig}
\usepackage{booktabs} 
\usepackage{enumitem}

\usepackage{pdfpages}

\usepackage{indentfirst}
\usepackage{xcolor}

\newcommand{\fhaim}{\texttt{{FHAIM}}}
\newcommand{\pSEL}{\ensuremath{\pi_{\mathsf{SELECT}}}}
\newcommand{\pMSR}{\ensuremath{\pi_{\mathsf{MEASURE}}}}

\newcommand{\pCOMP}{\ensuremath{\pi_{\mathsf{COMP}}}}
\newcommand{\pCOMB}{\ensuremath{\pi_{\mathsf{COMB}}}}

\newcommand*\circledopt[1]{\tikz[baseline=(char.base)]{
    \node[shape=circle, draw, inner sep=1.5pt, font=\scriptsize] (char) {#1};}}

\newcommand{\pOM}{\ensuremath{\pi_{\mathsf{1way}}}}
\newcommand{\pTM}{\ensuremath{\pi_{\mathsf{2way}}}}
\newcommand{\pERR}{\ensuremath{\pi_{\mathsf{ERR}}}}
\newcommand{\pGUMBEL}{\ensuremath{\pi_{\mathsf{GUMBEL}}}}
\newcommand{\pGAUSS}{\ensuremath{\pi_{\mathsf{GAUSS}}}}

\newcommand{\xmark}{$\textcolor{red}{\times}$}%

\DeclareMathOperator*{\argmax}{arg\,max}

\newenvironment{myprotocol}[1][htb]{%
    \floatname{algorithm}{Protocol}
   \begin{algorithm}[htb]%
   \footnotesize{}
  }{\end{algorithm}}

\begin{document}

\raggedbottom

\maketitle

\begin{abstract}
 Data is the lifeblood of AI, yet much of the most valuable data remains locked in silos due to privacy and regulations. As a result, AI remains heavily underutilized in many of the most important domains, including healthcare, education, and finance. Synthetic data generation (SDG), i.e.~the generation of artificial data
  with a synthesizer trained on real data, offers an appealing solution to make data available while mitigating privacy concerns, however existing SDG-as-a-service workflow require data holders to trust providers with access to private data.
 We propose $\fhaim$, the first fully homomorphic encryption (FHE) framework for training a marginal-based synthetic data generator on encrypted tabular data. $\fhaim$ adapts the widely used AIM algorithm to the FHE setting using novel FHE protocols, ensuring that the private data remains encrypted throughout and is released only with differential privacy guarantees. Our empirical analysis show that $\fhaim$ preserves the performance of AIM while maintaining feasible runtimes.
\end{abstract}

%
%
\section{Introduction}\label{sec:intro}
Despite the fact that modern AI applications are heavily data-driven, much of the most valuable real-world data remains siloed within research centers, hospitals, companies, and financial institutions. These silos are typically protected by strict access controls and regulations due to the sensitive and personal nature of the data they contain. This,  while necessary, significantly hampers data sharing and reuse. As a result, AI remains underutilized in many high-impact domains, including healthcare, genomics, biomedicine, commerce, education, and finance.

\textit{Synthetic data generation (SDG)}, i.e., the process of generating artificial data using a synthesizer trained on real data, offers an appealing approach to facilitate data sharing while mitigating privacy concerns \cite{hu2023sok}. When done well, synthetic data has the similar characteristics
as the original data but, crucially, without replicating personal information. 
This has led to SDG becoming a widely adopted tool for enabling data access \cite{pets_winner, V-CHAMPS}.
This success has facilitated numerous SDG-as-a-service 
platforms\footnote{Examples include Mostly AI, Gretel (acquired by NVIDIA), Syntegra, MDClone, Hazy, and Tonic.} where data holders outsource the process of SDG to 
third parties.

While many SDG-as-a-service platforms are marketed as privacy-preserving, they typically provide at most only \textit{output privacy}, most commonly through techniques such as differential privacy (DP) \cite{dwork2006calibrating} which aim to ensure that the released synthetic data does not leak information about individual records in the real data that was used to train the generator.\footnote{DP substantially improves the robustness of synthetic data against adversarial attacks \cite{golob2025SatML}.} However, these approaches implicitly assume that the service provider (the third party) has full access to the raw training data in plaintext. In many practical settings, data holders are subject to strict privacy regulations
and organizational constraints that prohibit disclosure of sensitive data to third parties, as the risks associated with personal data exposure can be harmful \cite{Martin2017,nytimes2019location}.  \textit{Input privacy}, i.e.~protecting the confidentiality of the input data, remains an open problem in outsourced synthetic data generation. 

\begin{table*}[ht!]
    \centering
    \small
    \setlength{\tabcolsep}{3pt}
    \renewcommand{\arraystretch}{1.05}
    \begin{tabularx}{\linewidth}{@{}l c >{\raggedright\arraybackslash}X c c@{}}
    \toprule
         \textbf{Approach}  & \textbf{Input Privacy} & \textbf{Output Privacy} & \#\textbf{Data Holders} & \#\textbf{Computing Parties} \\
    \midrule
         AIM \cite{mckenna2022aim} & \xmark & Global DP on unencrypted data & 1  & 1 \\
         CAPS \cite{pentyala2024caps} & MPC & Global DP in MPC protocols & 1 or Multiple & Multiple$^{*}$ \\
         FLAIM \cite{maddock2024flaim} & FL  & Distributed DP on unencrypted data & Multiple & Multiple \\
         \midrule
         $\fhaim$ (ours) & FHE & Global DP in FHE protocols & 1 & 1 \\
         \bottomrule
    \end{tabularx}
    \caption{\textbf{Input Privacy Extensions for AIM.} $^{*}$: needs more than 1 non-colluding computing server, efficient with 3.}
    \label{tab:related}
\end{table*}

While approaches based on secure multiparty computation \cite{pentyala2024caps} and federated learning \cite{maddock2024flaim} offer forms of input privacy, they require co-ordination among multiple non-colluding computational parties, making them unsuitable for SDG-as-a-service deployed \textit{by a single service provider for a single data holder}. 
To our knowledge, no prior work addresses providing input privacy in such settings.

We propose to leverage fully homomorphic encryption (FHE) ~\cite{gentry2009fullyStanford, rivest1978data} to enable synthetic data generation in outsourced settings, presenting the first FHE-based SDG of its kind. Our approach allows an SDG-as-a-service provider to train synthetic data generators directly on encrypted data without ever accessing the raw data (data holders encrypt their data before providing it to the service provider), thus providing input privacy. We focus on tabular data, a widely used data modality in privacy-sensitive domains such as healthcare, finance, and public administration.

To demonstrate the feasibility of our approach, we propose $\fhaim$, built on AIM \cite{mckenna2022aim}, a state-of-the-art method for DP tabular data generation  \cite{chen2025benchmarking, tao2021benchmarking}. AIM is a marginal-based method that fits a joint probability distribution on the real data and subsequently samples from it to generate synthetic data. AIM does this in an iterative manner: in each iteration (1) a subset of attributes is selected for which the marginal probability distribution over the real data differs the most from the current version of the synthetic data (\textit{select step}); (2) this marginal distribution is estimated over the real data (\textit{measure step}); and (3) this estimated marginal is added to a probabilistic graphical model and a new version of the synthetic data is generated (the \textit{generate step}). To provide formal output privacy with DP, the select step is non-deterministic, with attribute subsets that yield a higher mismatch more likely to be chosen; likewise, in the measure step, Gaussian noise is added to the marginal probability measurements. 


The key idea of $\fhaim$ is to perform all  data-dependent operations in the select and measure steps
within FHE. We design novel FHE protocols for these operations, effectively implementing DP within FHE (DP-in-FHE), meaning that the outputs of the FHE operations are protected with DP noise, even after decryption. 
Specifically, the data holder provides the service provider with an encrypted version of their private dataset, along with encrypted unit noise samples. This enables the service provider to perform the data-dependent operations in the select and measure steps on the encrypted data.
Once the noisy marginals are obtained, they are decrypted, and the provider proceeds with the generation step in-the-clear, since this step is data-independent and operates only on the already-privatized statistics. Importantly, the service provider never observes the noise used to provide DP guarantees. Hence, in addition to never seeing the real data, the service provider never observes the probability distributions estimated on the real data, thus providing stronger privacy guarantees. 
%
Our main contributions are: 
\begin{itemize}[noitemsep, topsep=0pt,parsep=0pt,partopsep=0pt, leftmargin=*]
    \item \textbf{First FHE-based SDG framework.} We introduce $\fhaim$, the first system that enables training marginal-based synthetic data generators directly on fully homomorphically encrypted data, providing input privacy without requiring multiple non-colluding parties.
    
    \item \textbf{Novel DP-in-FHE protocols.} We design efficient FHE protocols for marginal computation ($\pi_{\text{COMP}}$), differentially private query selection ($\pi_{\text{SELECT}}$), and noisy measurement ($\pi_{\text{MEASURE}}$), implementing the Gaussian and exponential mechanisms entirely within the encrypted domain. Towards this, we propose an efficient encrypted memory layout that enables marginal computation with multiplicative depth depending only on marginal degree $k$, ensuring scalability. We also propose to replace the $L_1$-norm quality score with a squared $L_2$-norm in the select step, avoiding unstable polynomial approximations of the absolute value function in FHE.
    
    
    
\end{itemize}

We demonstrate practical feasibility on three real-world datasets, achieving runtimes of $\sim$11 to $\sim$30 minutes while preserving the statistical utility and downstream ML performance of the original AIM algorithm.

%
%
\section{Related Work}\label{sec:related}
There is ample work on SDG with output privacy; see \cite{du2024systematicassessmenttabulardata} and references therein. The approach that we develop in this paper for AIM, can be extended to other SDG algorithms such as MST \cite{mckenna2021winning} and RAP \cite{vietri2022private}.

In contrast to the wealth of literature on SDG with formal output privacy guarantees, existing research on SDG with input privacy is very limited.
%
\cite{pentyala2024caps} leverage secure multiparty computation (MPC) to provide input privacy. While designed for scenarios with multiple data holders, e.g., multiple hospitals who want to train a generator over their combined data, an MPC setup can in principle be used for the single data holder scenario that we consider in this paper. Indeed, the data holder could send encrypted shares of their data to a set of MPC servers, who then proceed to train the generator. But MPC has a strong non-collusion assumption: if the computing servers collude, they can reconstruct the original data. As such, it may be difficult to set up a credible SDG-as-a-service based on MPC in the cloud of a single provider. Indeed, if all computing servers in an MPC setting reside with the same provider or entity (e.g.~a federal agency), then the risk and the perception that they may collude and reconstruct the original data may be high. 

Federated Learning (FL) based approaches, such as FLAIM \cite{maddock2024flaim}, provide input privacy in distributed, multi-silo settings by ensuring that raw data never leaves each data holder. This is achieved by letting each client perform the select and measure computations on site, and only share (privatized) statistics with a central server who updates a global probabilistic model that can be accessed by all the data holders. This approach, which is designed for multiple data holder scenarios, does not lend itself to the single service provider, single data holder scenario that we consider in this paper. Even in a hypothetical FL setup with only one data holder, that data holder would be performing a lot of the computations, going against the intent of outsourcing the SDG training to a service provider.  

Table \ref{tab:related} 
summarizes the distinction between our proposed approach and the closest existing work.
To the best of our knowledge, there is no work on training of synthetic data generators over data that is encrypted with fully homomorphic encryption (FHE), neither with AIM nor with any other SDG algorithm for any kind of data modality. Existing research 
on combining FHE and DP to compute DP statistics such as marginals \cite{roy2020crypte, ushiyama2022homomorphic, ushiyama2021construction, bakas2022private} 
focuses on individual statistics rather than SDG training.
For completeness, we mention that while 
recent work explored ``DP for free'' from HE noise \cite{ogilvie2024differential}, this faces significant barriers (data-dependent variance, noise growth, parameter-dependence) and remains largely theoretical. Our approach explicitly considers DP noise in FHE for standard DP guarantees and can be adapted to future progress in this area.

%
%
\section{Preliminaries}\label{sec:prelims}

\subsection{Fully Homomorphic Encryption}

Given a target plaintext transformation $f(\cdot)$ that maps an input $x$ to an output $y$, fully homomorphic encryption (FHE) defines a corresponding homomorphic function $g(\cdot)$ that operates on the encrypted input $\mathrm{Enc}_{pk}(x)$. The result of this operation remains in the ciphertext domain, satisfying the core consistency equation: $\mathrm{Dec}_{sk}(g(\mathrm{Enc}_{pk}(x))) = f(x)$ where $\langle pk,sk\rangle$ are a mathematically linked \textit{public key} and \textit{secret key} pair. The cryptographic security of FHE schemes relies on the Learning With Errors (LWE) assumption ~\cite{regev2009lattices}. Practical implementations frequently utilize the Ring LWE (RLWE) ~\cite{lyubashevsky2013ideal} variant, which is structured over the polynomial ring $R_Q = \mathbb{Z}_Q[x] / \langle x^N + 1 \rangle$. In this setting, the polynomial degree $N$ is a critical security parameter, typically selected within the range of $2^{12}$ to $2^{16}$. Note that in the following sections, $[\![x]\!]=\mathrm{Enc}_{pk}(x)$.

FHE facilitates computation on encrypted data through a suite of fundamental primitives. Specifically, homomorphic addition (\texttt{HE.Add}) and homomorphic multiplication (\texttt{HE.Mult}) execute arithmetic operations on pairs of ciphertexts, while homomorphic rotation (\texttt{HE.Rot}) performs cyclic shifts on encrypted vectors.
Correctness of FHE operations is guaranteed provided two conditions are met: (1) the ciphertext level remains within the predetermined multiplication depth limit, and (2) the accumulated noise does not exceed the decryption radius (typically $Q/2$ for the ciphertext modulus $Q$). Under these constraints, decryption maintains accuracy within a prescribed error tolerance, typically $10^{-9}$ to $10^{-12}$. 

In this paper, we employ the CKKS scheme \cite{cheon2017homomorphic}, which supports approximate arithmetic on real numbers. While the computation of marginals (i.e., counts) in AIM is integer-valued, subsequent operations such as the noise addition to provide DP guarantees involve real-valued functions. 
Though integer based schemes such as BFV~\cite{fan2012somewhat} and BGV~\cite{brakerski2014leveled} are efficient for marginals computations, they would require costly fixed-point encoding and rescaling operations for each real-valued computation. This makes CKKS the most efficient choice for our work with DP-in-FHE.
In Section \ref{sec:results} we compare the utility of synthetic data generated with $\fhaim$ (over encrypted data) with AIM (over plaintext data), empirically demonstrating that the use of an approximate 
FHE scheme has minimal impact on the results.

\subsection{Differential Privacy}
Differential Privacy (DP) is a formal privacy notion that bounds the impact an individual in a dataset $D$ can have on the output of an algorithm or operation $\mathcal{A}$ performed over $D$ \cite{dwork2006calibrating}. 
Formally, $\mathcal{A}$ is called $(\varepsilon,\delta)$-DP if for all pairs of neighboring datasets $D, D'$ $\in$ $\mathbb{D}$, i.e., $D'$ can be obtained
from $D$ by adding or removing a single record (record-level privacy), and for all subsets $O$ of $\mathcal{A}$'s range,
$\mbox{P}(\mathcal{A}(D) \in O) \leq e^{\varepsilon} \cdot \mbox{P}(\mathcal{A}(D') \in O) +\delta$. The parameter $\varepsilon \geq 0$ denotes the \textit{privacy budget} or privacy loss, while $\delta \geq 0$ denotes the probability of violation of privacy, with smaller values indicating stronger privacy guarantees in both cases.
$\mathcal{A}(D)$ and $\mathcal{A}(D')$ could e.g.~be marginal probability distributions estimated on datasets $D$ and $D'$ respectively.

A DP operation $\mathcal{A}$ is commonly created from an operation $f$ by adding noise that is inversely proportional to $\varepsilon$ and proportional to the \textit{sensitivity} of $f$, in which the sensitivity measures the maximum impact a change in the underlying dataset can have on the output of $f$. For example, let $f: \mathbb{D} \to \mathbb{R}^m$ and let $D \sim D'$ denote that $D$ and $D'$ are neighboring datasets, then the $L_2$ sensitivity of $f$ is $\Delta(f) = \max_{D \sim D'} \|f(D) - f(D')\|_2$, while the $L_1$ sensitivity of $f$ is based on the L1-norm, i.e., $\Delta(f) = \max_{D \sim D'} \|f(D) - f(D')\|_1$.

\begin{definition}[Gaussian Mechanism]
Let $f: \mathbb{D} \to \mathbb{R}^m$, then  the Gaussian Mechanism adds i.i.d.~Gaussian noise with scale proportional to $\Delta(f)$ to each entry of $f(D)$. Mathematically, $\mathcal{A}(D) = f(D) + \sigma\Delta(f)\mathcal{N}(0, \mathbf{I})$, where $\sigma$ is a scaling coefficient and  $\mathbf{I}$ is the $m \times m$ identity matrix.
\end{definition}

Similarly, the exponential mechanism is commonly used to make a private choice from a set $\mathcal{R}$ of possible outcomes.

\begin{definition}[Exponential Mechanism] Given a quality score function $s: (\mathcal{R},\mathbb{D}) \to \mathbb{R}$
and a privacy budget $\varepsilon \geq 0$, the exponential mechanism samples an output $r \in \mathcal{R}$ from the probability distribution:
$\Pr[\mathcal{A}(D) = r] \propto \exp\left(\frac{\varepsilon \cdot s(r,D)}{2\Delta}\right)$,
where $\Delta = \max_{r \in \mathcal{R}} \Delta(s(r,.))$ denotes the global sensitivity of the quality function.
\end{definition}
We employ the Gumbel-Max trick as a computationally efficient alternative to the standard exponential mechanism 
\cite{mcsherry2007mechanism}. For each candidate $r \in \mathcal{R}$, we compute $s(r,D) + G_r$ where $G_r \sim \text{Gumbel}(0, \beta)$ with scale $\beta = \frac{2\Delta}{\varepsilon}$, and return $\argmax_{r \in \mathcal{R}} (s(r,D) + G_r)$. This procedure is equivalent to the Report Noisy Max (RNM) algorithm and generates samples from the exact exponential mechanism distribution.

The \emph{post-processing property} of DP guarantees that if $\mathcal{A}$ is DP,
then $g \circ \mathcal{A}$ is also DP,
where $g$ is an arbitrary function. I.e., 
any arbitrary computations performed on DP output preserve DP without any effect on the privacy budget $\varepsilon$. 

\subsection{Synthetic Data Generation}\label{sec:sdgpremlim}
A private tabular dataset $D$ consists of $N$ instances, where each instance $x \in D$ is defined by a set of $d$ attributes. Each attribute $x_i$ takes values from a finite, discrete domain $\Omega_i$ of size $|\Omega_i| = \omega_i$. The complete domain of the data is the cartesian product of these individual domains, denoted by $\Omega = \prod_{i=1}^{d} \Omega_i$. Our objective is to generate a synthetic dataset $\widehat{D}$ that accurately reflects the statistical properties of $D$ while satisfying input and output privacy guarantees.

For a subset of attribute indices $w \subseteq \{1, \dots, d\}$, a marginal $q_w(D)$ is a vector in $\mathbb{R}^{|\Omega_w|}$ representing the frequency in $D$ of each 
element of 
the sub-domain $\Omega_w = \prod_{i \in w} \Omega_i$. Specifically, the entry corresponding to a configuration $y \in \Omega_w$ is given by: $q_w(D)_y = 
\sum_{x \in D} \mathbb{I}(x_w = y)$ where $\mathbb{I}(\cdot)$ is the indicator function and $x_w$ is the projection of $x$ onto the attributes in $w$. 
A $k$-way marginal refers to a marginal where the subset of attributes has cardinality $|w| = k$. 

To generate $\widehat{D}$, we use the marginal-based method AIM  
\cite{mckenna2022aim}. It accepts a workload $\mathcal{W} = \{w_1, w_2, \dots, w_m\}$, which is a pre-defined collection of attribute subsets of interest. Given a total privacy budget $\varepsilon$, the goal is to ensure $q_w(\widehat{D}) \approx q_w(D)$ for all $w \in \mathcal{W}$. To achieve this, AIM first computes noisy 1-way marginals
on $D$ to initialize the probabilistic graphical model (\textbf{initialization} step, which is a data-dependent operation) 
and then iteratively constructs the synthetic data distribution by repeating:

\begin{enumerate}[noitemsep, topsep=0pt,parsep=0pt,partopsep=0pt, leftmargin=*]
    \item \textit{\textbf{Select}}: 
    In each iteration $t$, AIM identifies a workload query $w^* \in \mathcal{W}$ that maximizes the error between $D$ and $\widehat{D}$ using the exponential mechanism. The quality score $s^t(w, D)$ for  $w$ 
    is given as: $s^t(w, D) = \alpha_w \cdot (\| q_w(D) - q_w(\widehat{D}) \|_p - \rho )$ where $\| \cdot \|_p$ denotes a norm (typically the $L_1$ or $L_2$ norm), $\alpha_w$ is a query-specific weight, and $\rho$ is a penalty term. To implement this efficiently, we employ the Gumbel-Max trick: $w^* = \arg\max_{w \in \mathcal{W}} \left( s^t(w, D) + G_w \right)$ where $G_w \sim \text{Gumbel}(0, \beta)$ with $\beta$ proportional to the sensitivity of $s^t(w, D)$. For details, please refer to Eq 1 in \cite{mckenna2022aim}.   
    This step is data-dependent, as the quality score is computed directly using the private database $D$. 
    \item \textit{\textbf{Measure}}: Once $w^*$ is selected, the algorithm computes a noisy measurement of the true marginal using the Gaussian mechanism $y_{w^*} = q_{w^*}(D) + \mathcal{N}(0, \sigma^2_t\mathbf{I})$ where the noise scale $\sigma_t$ for the $t^{th}$ iteration is determined by the remaining privacy budget (Note that sensitivity of $q$ is 1). This is a data-dependent step. 
    \item \textit{\textbf{Generate}}: AIM maintains a compact representation of the synthetic data, typically using a Probabilistic Graphical Model \cite{mckenna2019graphical}. After each measurement, the model is updated to maintain consistency with all previously measured marginals while minimizing the objective function that finds a distribution that best explains the noisy measurements $y_{w^*}$, effectively refining $\widehat{D}$. This step 
    depends only on the noisy measurements and not on the raw data. 
\end{enumerate}


\noindent
\textit{\textbf{Compute:}}
In practical implementations of AIM, it is common to first compute the exact values of all marginals in  
$ \mathcal{W}$ on $D$, including the 1-way marginals used for initialization and all higher-order marginals that may be selected during the adaptive iterations. We refer to this data-dependent preprocessing as the \textbf{compute} step.\footnote{In AIM, the precomputed exact answers are used by the select step to evaluate approximation error, while noisy versions are generated only when a marginal is to be  measured.}
This step does not modify the AIM algorithm or its privacy guarantees, since no true marginals are revealed.  We note that such precomputation is particularly efficient in FHE (see Prot. \ref{prot:comp_he}).





\textit{Privacy Guarantee} \cite{mckenna2022aim}: By using the Gaussian Mechanism for measurements and the Exponential Mechanism for selection,  AIM satisfies $(\varepsilon, \delta)$-DP with respect to $D$. Specifically, the privacy budget is managed using the composition properties of Zero-Concentrated Differential Privacy (zCDP), which are subsequently converted to $(\varepsilon, \delta)$. 






%
%
\section{$\fhaim$: System Description}\label{sec:method}
Let \textit{Alice} be a data holder with a  private dataset $D$ and let Bob be a SDG-as-a-service provider who owns an SDG process $\mathcal{G}$. Alice wishes to utilize Bob’s service to generate a synthetic dataset $\widehat{D}$ without sharing her raw private  data $D$. We propose a system comprising four entities (see Fig.~\ref{fig:entities}), following a model similar to \cite{ushiyama2021construction}. 
\begin{itemize}[noitemsep, leftmargin=*,topsep=0pt]
    \item \textbf{\textit{Data Entity (DE)}}: represents the data holder (\textit{Alice}). The DE encrypts the private dataset $D$ and other auxiliary information (such as generated unit noise samples) using the public key $pk$ received from the Crypto-Service Entity and sends the encrypted data to the Computation Entity. After this initial upload, the DE is not involved in further computations.
    
    \item \textbf{\textit{Computation Entity (CE)}}: provides the computational infrastructure (e.g., a cloud platform like Amazon SageMaker) to run  $\mathcal{G}$ including the FHE-based SDG 
    protocols (see Sec.~\ref{sec:prot}). The CE only operates on data that is either encrypted or protected by DP guarantees.
    
    \item \textbf{\textit{Generation Entity (GE)}}: represents the service provider (\textit{Bob}) who owns the 
    trademark SDG algorithm $\mathcal{G}$. The GE uploads the FHE-compatible protocols and the iterative SDG logic to the CE. GE may own CE.
    
    \item \textbf{\textit{Crypto-Service Entity (CSE)}}: This entity manages the FHE keys. It generates the key pairs $\langle pk,sk \rangle$, provides the public encryption key $pk$ to the DE, and performs the decryption of the noisy DP statistics requested by the CE using $sk$. Any data seen by the CSE is already protected by DP noise added within the encrypted domain. \textit{Alice} or any other trusted third-party (such as AWS Cryptography) can be a CSE. 
\end{itemize}

The above proposed system works for any SDG algorithm. In particular, we adapt the system in Fig.~1 to AIM \cite{mckenna2022aim} and develop novel FHE-based SDG protocols to this end. 

\begin{figure*}[t!]
    \centering
    \begin{subfigure}[t]{0.49\textwidth}
        \centering
        \includegraphics[width=\linewidth]{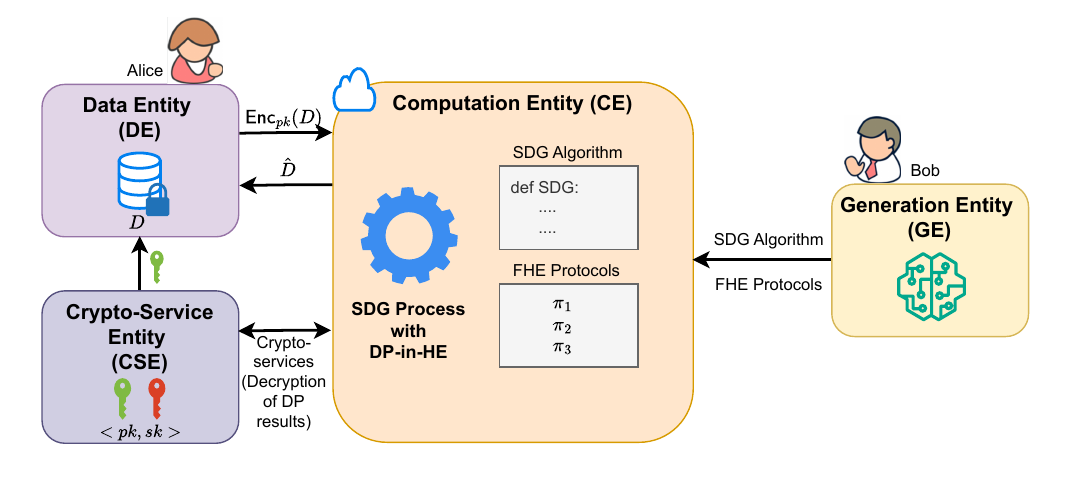}
        \caption{System entities (see Sec.~\ref{sec:method})}
        \label{fig:entities}
    \end{subfigure}
    \hfill
    \begin{subfigure}[t]{0.49\textwidth}
        \centering
        \includegraphics[width=\linewidth]{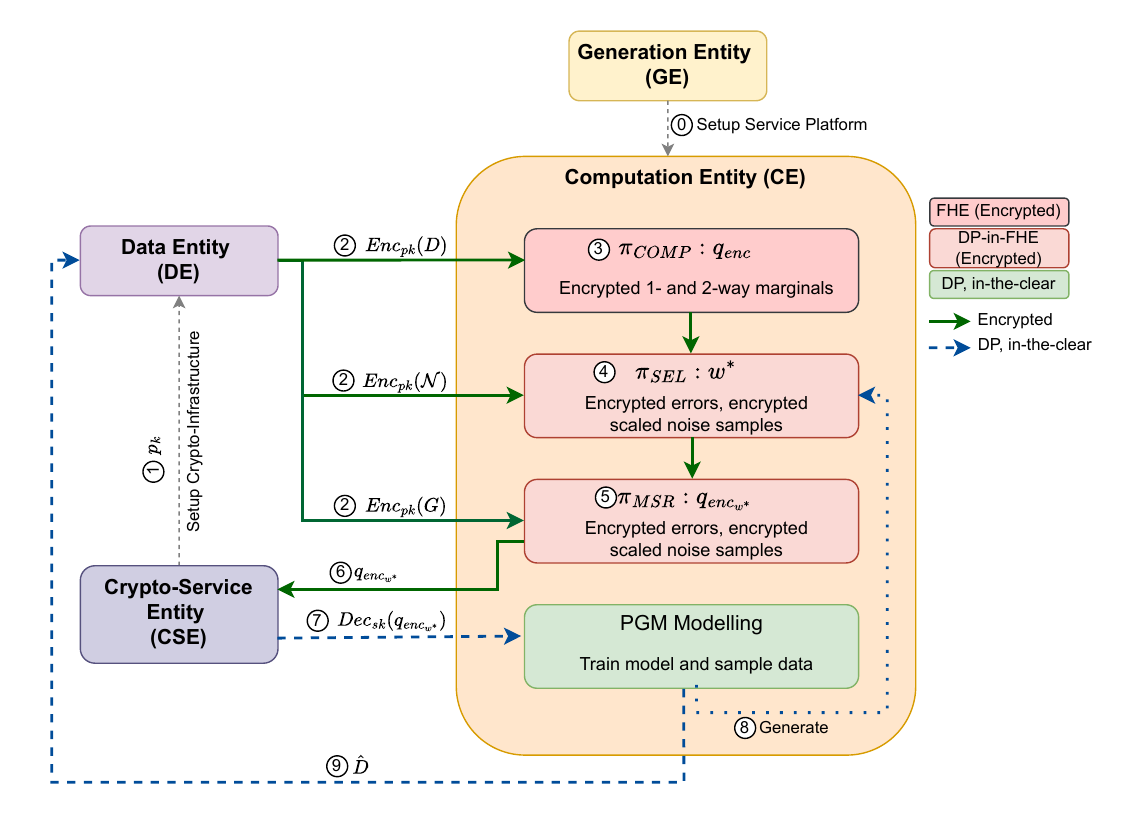}
        \caption{$\fhaim$ workflow (see Sec.~\ref{sec:method})}
        \label{fig:process}
    \end{subfigure}
    \caption{\textbf{Overview of $\fhaim$.} (Left) system entities; (right) end-to-end workflow.}
    \label{fig:overview}
\end{figure*}

\paragraph{$\fhaim$ Workflow.} The proposed $\fhaim$ workflow, illustrated in Fig.~\ref{fig:process}, begins with setting up the service platform and crypto-infrastructure. The CSE generates an FHE key pair and provides the public encryption key $pk$ to the DE (Step \circledopt{1}). The DE then preprocesses and encrypts their private dataset $D$, along with a collection of pre-generated unit noise samples (see Sec.~\ref{app:noise}), and uploads the resulting ciphertexts $Enc_{pk}(D)$, $Enc_{pk}(\mathcal{N})$, and $Enc_{pk}(G)$ to the CE (Step \circledopt{2}). Using the FHE-compatible protocols and iterative logic provided by the GE (in Step \circledopt{0}), the CE executes the DP-in-FHE synthetic data generation process. The CE first computes encrypted 1- and 2-way marginals ($q_{enc}$) by executing the FHE protocol $\pCOMP$ on the encrypted data (Step \circledopt{3}). It then iteratively follows the AIM workflow by executing the FHE protocols $\pSEL$ (Step \circledopt{4}) and $\pMSR$ (Step \circledopt{5}) to obtain an encrypted differentially private marginal $q_{enc_{w^*}}$ for $w^*$. The noisy encrypted marginal is sent to the CSE for decryption using the secret key $sk$ (Step \circledopt{6} \& \circledopt{7}). Since DP noise is added within the encrypted domain, the CSE can safely decrypt and return the plaintext noisy marginals to the CE. The CE then executes the modeling in the ``generate'' step of AIM in the clear (Step \circledopt{8}), leveraging the post-processing property of differential privacy to refine the synthetic data distribution without further compromising input privacy. After a fixed number of iterations, the trained DP model finally generates $\widehat{D}$ and sends it to DE (Step \circledopt{9}).

\vspace{0.3em}
\noindent
\textbf{Trust Model.} We consider CE and CSE to be semi-honest, i.e., they follow the protocols but attempt to learn about $D$. DE does not trust GE, and GE trusts CE with its trademark algorithm. We additionaly consider that CE and CSE do not collude. The keys are secured and are held only by CSE for decryption and DE for encryption. The system is designed to provide DP guarantees to the individuals in $D$.

\vspace{0.3em}
\noindent
\textbf{Noise Generation.}A central challenge in DP-in-FHE is generating the randomness required to sample DP noise. We adopt a pre-sampling strategy where DE encrypts and uploads a sufficient collection of unit noise samples during the initial upload phase to support the full AIM execution. While this adds a one-time computational burden on the DE, we discuss the challenges and propose strategy in App.\ref{app:noise}.

\section{$\fhaim$: FHE-SDG Protocols}\label{sec:prot}

\noindent
\textbf{FHE Protocol for Compute ($\pCOMP$).}
To enable efficient arithmetic, $D$ is first one-hot encoded (OHE), transforming categorical attributes into binary vectors. 
This encoding allows the CE to compute marginal histograms using native additions and multiplications, thereby avoiding costly homomorphic comparison operations. Protocol \ref{prot:comp_he} executes the compute step in FHE. It takes the column-wise encrypted one hot encoded dataset $[\![D]\!]$ to output encrypted 1- and 2-way marginals $[\![q]\!]$ (as is considered by default in AIM implementation).

\begin{myprotocol}
   \caption{$\pCOMP$: FHE Protocol for \textbf{COMPUTE}}
   \label{prot:comp_he}
    \textbf{Input:} Encrypted one-hot encoded data $[\![D]\!] \in \{0,1\}^{N \times \sum_{i=1}^{d}\omega_i}$, 
   workload  $\mathcal{W}$ with attribute subsets from $D$ \\
    \textbf{Output:} $[\![q]\!]$, $\forall w \in \mathcal{W}$
   \begin{algorithmic}[1]
   \FORALL{$w \in \mathcal{W}$}
       \IF{$|w| = 1$} 
           \FOR{$j \leftarrow 1$ \bf{to} $\omega_w$}
               \STATE $[\![q_w[j]]\!] \leftarrow \textsc{Combine}\!\left(\sum_{i=1}^{N} [\![D_w[j]]\!]\right)$
           \ENDFOR
       \ELSIF{$|w| = 2$, i.e., $w = \{a_1, a_2\}$}
           \FOR{$j \leftarrow 1$ to $\omega_{a_1}$}
                \FOR{$k \leftarrow 1$ to  $\omega_{a_2}$}
               \STATE $[\![v]\!] \leftarrow \sum_{i=1}^{N} [\![D_{a_1}[i,j]]\!] \cdot [\![D_{a_2}[i,k]]\!]$
               \STATE $[\![q_w[j \cdot |\Omega_{a_2}| + k]]\!] \leftarrow \textsc{Combine}([\![v]\!])$ 
               \ENDFOR
           \ENDFOR
       \ENDIF
   \ENDFOR
   \STATE \textbf{return} $[\![q]\!]$, $\forall w \in \mathcal{W}$
   \end{algorithmic}
\end{myprotocol}

A primary challenge in this step is determining an efficient encrypted memory layout for the sparse OHE data. 
A naive design that packs the entire $N \times \sum \omega_i$ matrix into a single ciphertext is infeasible for three key reasons: 
(i) \textit{Capacity constraints}, as the total data size typically exceeds the slot limits of standard polynomial parameters; 
(ii) \textit{Alignment costs}, as computing 2-way marginals would necessitate expensive homomorphic rotations to align columns corresponding to different attributes; and 
(iii) \textit{Ragged domains}, as varying attribute domain sizes $|\Omega_w|$ make uniform packing inefficient.

To overcome these limitations, we propose a \textit{column-wise SIMD packing} strategy. 
Each binary column of the OHE matrix is encrypted into a distinct ciphertext, utilizing SIMD slots to process up to $L(>N)$ records simultaneously. 
This design ensures that columns are naturally pre-aligned for interaction, effectively reducing the computational complexity by a factor of $L$. 
Consequently, computing a 1-way marginal reduces to summing the encrypted column for each bin (requiring $O(\log L)$ rotations at multiplicative depth 0). Similarly, 2-way marginals for attribute pair over $a_1, a_2$, with domain size $\omega_{a_1}, \omega_{a_2}$, require only element-wise multiplication of the respective ciphertexts followed by summation (multiplicative depth 1), eliminating the need for rotation operations during the multiplication phase. 
See App. \ref{app:fhe} for illustrative explanations.
From an FHE perspective, this approach is highly scalable: the multiplicative depth depends only on the marginal degree ($k-1$), not on the dataset size $N$ or domain size $\Omega$. 
The resulting per-bin scalar ciphertexts are subsequently consolidated into a single packed ciphertext using the \textsc{Combine} sub-protocol (see Protocol \ref{prot:combine} in App \ref{prot:combine_desc}), which employs a selector plaintext to arrange each scalar into its designated slot for the downstream Select step.

\noindent
\textbf{FHE Protocol for SELECT ($\pSEL$).}
The select step (Protocol \ref{prot:select}) identifies the workload query $w^*$ that maximizes the error between the real and synthetic data distributions. 
For each query $w \in \mathcal{W}$, the CE computes an encrypted error vector $[\![d_j]\!]$ by subtracting the estimated answers $[\![q_w(\hat{D})]\!]$ from the encrypted true answers $[\![q_w(D)]\!]$. 
The core challenge in this step lies in defining a numerically stable quality score $s(w, D)$ within the encrypted domain.

Standard AIM implementations rely on the $L_1$ norm, but implementing this in FHE presents critical hazards due to the need to approximate the non-polynomial absolute value function. Polynomial approximations face a strict trade-off: low-degree versions degrade utility, while high-accuracy solutions (e.g., the degree-1024 model from the FHERMA challenge\footnote{\url{https://fherma.io/content/65de3f45bfa5f4ea4471701c}}) are too computationally expensive for iterative use. Furthermore, these approximations are only stable within fixed ranges (typically $[-1, 1]$), creating a risk of divergence or decryption failure when processing dynamic, data-dependent error magnitudes that exceed these bounds.

\begin{myprotocol}
   \caption{$\pSEL$: FHE Protocol for \textbf{SELECT}}
   \label{prot:select}
    \textbf{Input:} Candidate workload queries $Q_C \subseteq \mathcal{W}$, 
    encrypted marginals $[\![q]\!]$, 
    estimated marginals $\hat{q}$,
    privacy parameter $\varepsilon$, 
    sensitivity $\Delta$,
    bias vector $\mathbf{b}$, weights vector $\mathbf{w}$ \\
    \textbf{Output:} Selected query $w^*$
   \begin{algorithmic}[1]
   \FORALL{$w \in Q_C$}
       \STATE $[\![\hat{q}_w]\!] \leftarrow \textsc{Enc}(\hat{q}_w)$ \COMMENT{encrypt estimates}
       
       \FOR{$j \leftarrow 1$ to $\omega_w$}
           \STATE $[\![d_j]\!] \leftarrow [\![q_w[j]]\!] - [\![\hat{q}_w[j]]\!]$
       \ENDFOR
       \STATE $[\![\mathbf{d}_w]\!] \leftarrow \textsc{Combine}([\![d_1]\!], \ldots, [\![d_{\omega_w}]\!])$ 
       \STATE $[\![s_w]\!] \leftarrow \textsc{SquaredNorm}([\![\mathbf{d}_w]\!])$ 
       
       \STATE $[\![s_w]\!] \leftarrow ([\![s_w]\!] - b_w) \cdot w_w$
       
       \STATE \textit{// Add Gumbel Noise (Encrypted Gumbel-Max)}
       \STATE $[\![s_w]\!] \leftarrow [\![s_w]\!] + [\![G_w]\!] \cdot \frac{2\Delta}{\epsilon}$
       
       \STATE $s_w \leftarrow \textsc{Dec}([\![s_w]\!])$
   \ENDFOR
   \STATE $w^* \leftarrow \arg\max_{w \in Q_C} \; s_w$
   \STATE \textbf{return} $w^*$
   \end{algorithmic}
\end{myprotocol}

To ensure operational robustness, we propose to use the \textit{squared $L_2$ norm} instead. 
This formulation replaces the approximate absolute value with homomorphic squaring -- a native CKKS operation consuming only a single level of multiplicative depth. 
This eliminates both the accuracy-efficiency tradeoff and the range sensitivity issues, providing a deterministic protocol that is unconditionally stable for arbitrary error magnitudes. 
We propose the corresponding quality score as below : 
\begin{equation}
s^t(w, D) =  \alpha_w \left(\left\| q_w(D) - q_w(\hat{D}_{t-1}) \right\|_2^2 - \sigma_t^2 \omega_w\right)
\label{eqn:l2_score_sel}
\end{equation}
Here, $\alpha_w = \sum_{x \in \mathcal{W}} c_x \cdot |w \cap x|$, where $c_x$ is the weight assigned to the workload query $x$. The penalty term $\rho = \sigma^2\omega_w$ (see Theorem \ref{thm:bias}) and
The sensitivity $\Delta  s(w, D) = \max_{w \in \mathcal{W}} \alpha_w(2N + 1)$ (see Theorem \ref{thm:sens}).


To compute the $w^*$ (see select step in Sec.~\ref{sec:sdgpremlim}), the CE scales the pre-encrypted Gumbel noise  by $\frac{2 \cdot \Delta}{\varepsilon}$ where $\Delta = \max_{w \in \mathcal{W}} \Delta(s(w, \cdot))$, and adds the scaled noise directly to the encrypted quality scores. 
The resulting noisy scores are sent to the CSE for decryption. 
Because the DP noise is applied within the encrypted domain, the CSE can safely decrypt the scores and compute the $\arg\max$ in the clear. This output is permissible under the post-processing property of DP, shifting the complex selection logic out of the FHE circuit while maintaining strict input privacy.

Empirical results confirm that while the $L_1$ implementation suffers from utility loss due to approximation errors, the squared $L_2$ approach produces results nearly identical to the plaintext baseline (see Sec.~\ref{sec:results}, Table \ref{tab:utility}).

\noindent
\textbf{FHE Protocol 
($\pMSR$).
} 
The measure step (See Protocol \ref{prot:measure_he} and details in the App. \ref{app:fhe}) computes a noisy marginal for a given workload $w^*$ by adding encrypted Gaussian noise to the encrypted marginal answer, implementing the Gaussian mechanism entirely within the encrypted domain. 




%
%
\section{Results}\label{sec:results}


\begin{figure*}[t!] 
    \centering
    \begin{subfigure}[b]{0.48\linewidth}
        \centering
        \includegraphics[width=0.7\linewidth]{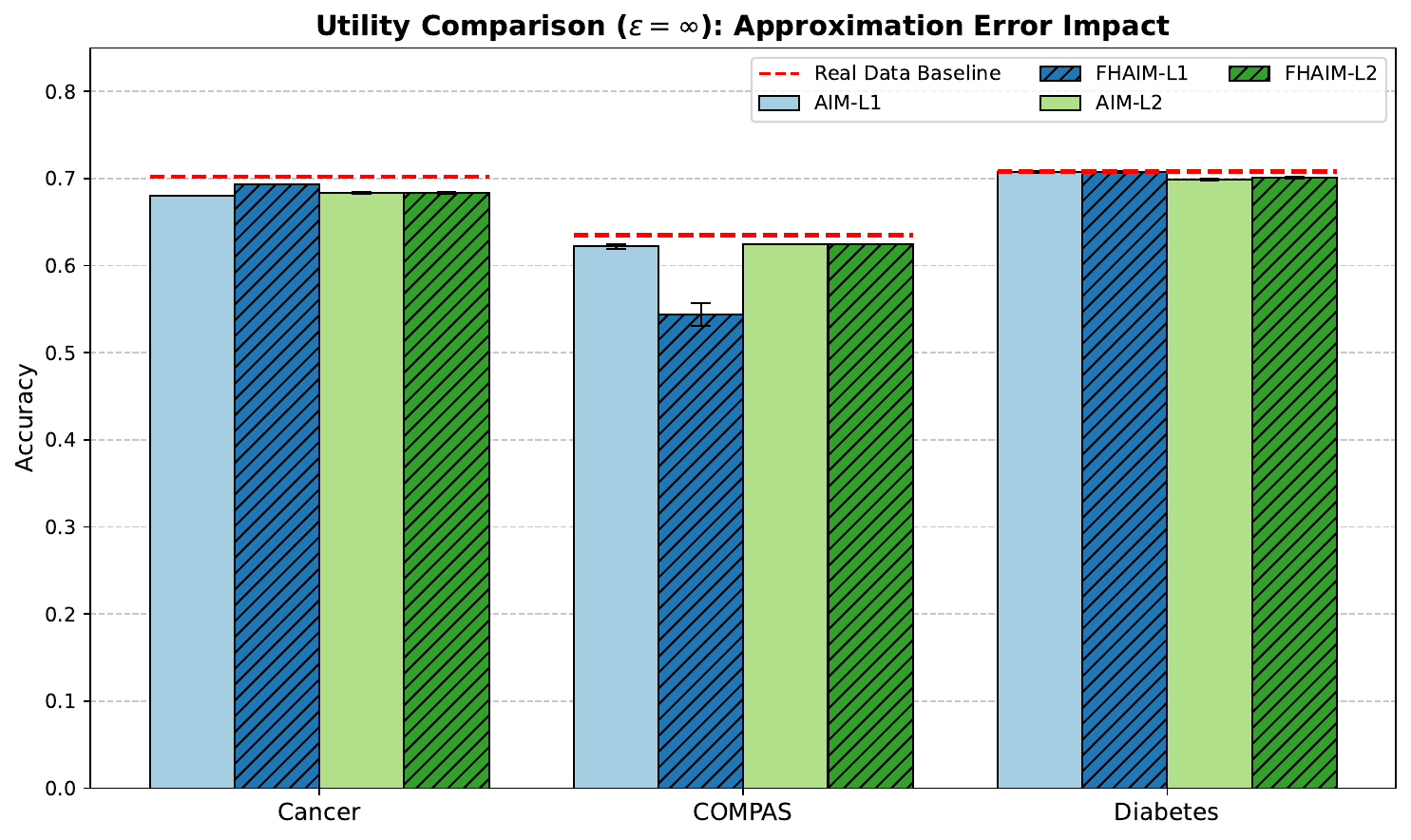}
        \caption{\textbf{$\epsilon=\infty$ (No DP Noise).} $\fhaim$-L1 suffers accuracy loss (e.g., COMPAS) purely due to approximation errors, while $\fhaim$-L2 remains stable.}
        \label{fig:eps_inf}
    \end{subfigure}
    \hfill 
    \begin{subfigure}[b]{0.48\linewidth}
        \centering
        \includegraphics[width=0.7\linewidth]{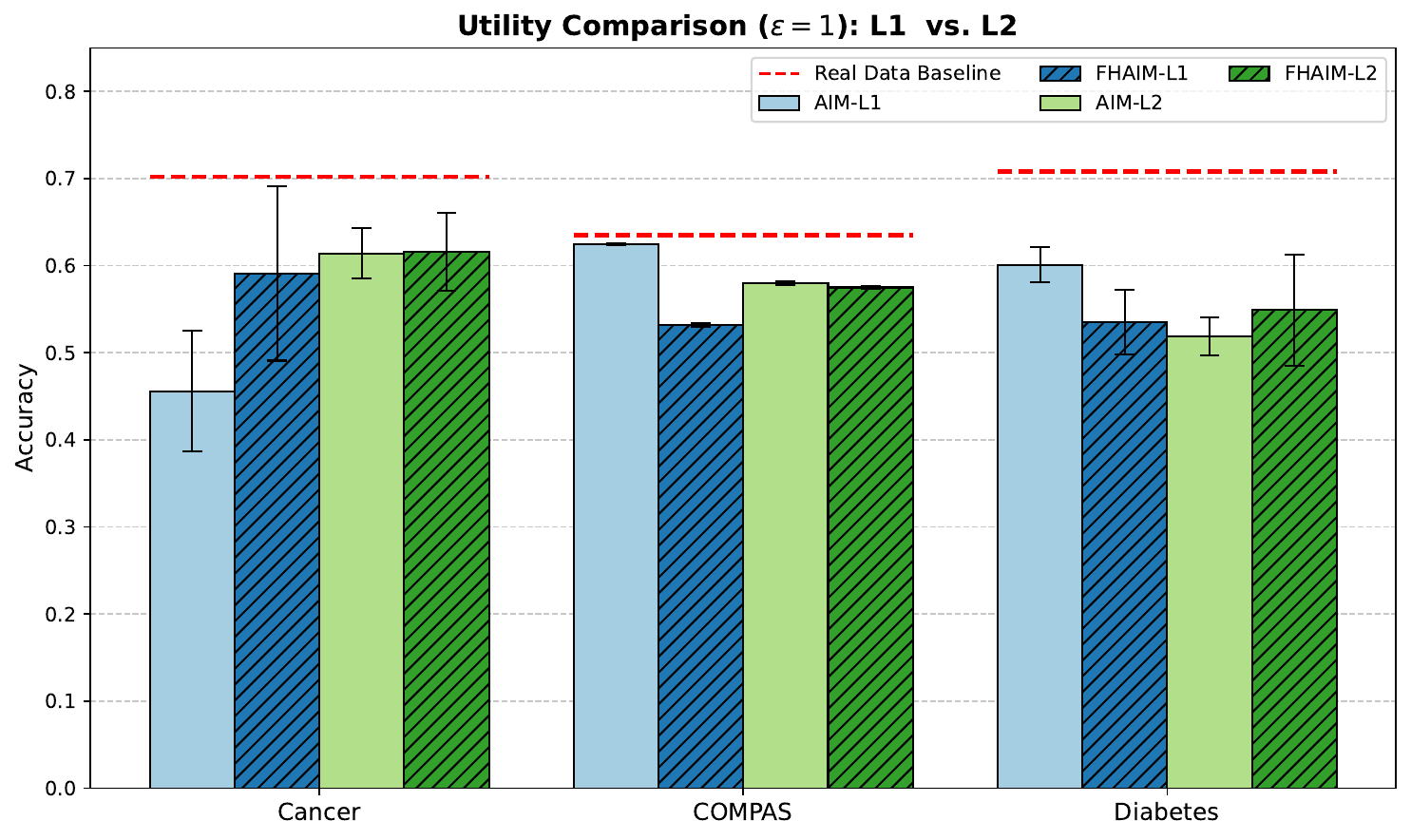}
        \caption{\textbf{$\epsilon=1$ (High Privacy).} $\fhaim$-L2 (hatched green) matches the AIM-L2 baseline, whereas $\fhaim$-L1 degrades significantly due to instability.}
        \label{fig:eps_1}
    \end{subfigure}
    
    \caption{\textbf{Utility Comparison.} Classification accuracy across three datasets. The results isolate the impact of polynomial approximation errors in the $L_1$ norm (blue bars) versus the stability of the exact squared $L_2$ norm (green bars). Red dashed lines indicate real data performance.}
    \label{fig:utility_combined}
\end{figure*}

We evaluate $\fhaim$ on three datasets: breast-cancer \cite{misc_breast_cancer_14}, prison recidivism (COMPAS)\footnote{\url{https://www.propublica.org/datastore/dataset/compas-recidivism-risk-score-data-and-analysis}} \cite{compas}, and diabetes \cite{smith1988using}. Each dataset is randomly split into train and test subsets using a 80\%/20\% split ratio. 
More details on the datasets are in Appendix \ref{app:data}.
%

\subsection{Utility}\label{sec:utility}
In 
Figure \ref{fig:eps_1} and Figure \ref{fig:eps_inf}
, we report the statistical utility of the generated data as the workload error ($\Delta)$:
$
    \Delta(D,\widehat{D}) = \frac{1}{|\mathcal{W}|} \sum_{w \in \mathcal{W}} \| q_{w}(D) -  q_{w}(\widehat{D}) \| 
$.
We also report the utility of the generated data by training a logistic regression model on the synthetic data and evaluating the trained models on the test data. 
Figures \ref{fig:eps_inf} and \ref{fig:eps_1} compare three experimental settings: the real data baseline (red dashed lines), the non-private $\epsilon=\infty$ setting (Fig. \ref{fig:eps_inf}), and the strict $\epsilon=1$ privacy setting (Fig. \ref{fig:eps_1}) . The results validate the privacy-utility trade-off: model fidelity is highest on real data and exhibits a natural, expected reduction under strict privacy constraints ($\epsilon=1$), reflecting the unavoidable cost of rigorous differential privacy. The $\epsilon=\infty$ results (Fig. \ref{fig:eps_inf}) confirm that without noise, the underlying synthesis logic is sound, while the $\epsilon=1$ case (Fig. \ref{fig:eps_1}) demonstrates that $\fhaim$-L2 maintains this stability even under high-privacy conditions.

The $\varepsilon=\infty$ results (Figure \ref{fig:eps_inf}) isolate structural fidelity, revealing a significant utility gap where $\fhaim$-L1 underperforms the plaintext baseline due to accumulated errors from approximating the non-polynomial $L_1$ norm . Conversely, $\fhaim$-L2 eliminates this gap by leveraging the exact homomorphic squaring of the $L_2$ norm, achieving parity with the plaintext model . This robustness persists under strict privacy constraints ($\varepsilon=1$, Figure \ref{fig:eps_1}), confirming that $\fhaim$-L2 preserves utility while providing input privacy, with only minor stochastic variations attributable to DP noise.

\subsection{Computation Cost}\label{sec:efficiency}

\label{sec:comp_cost}

We evaluate the efficiency of $\fhaim$ by analyzing the runtime and memory footprint across the three phases of the protocol: \textsc{Compute}, \textsc{Select}, and \textsc{Measure} on a 16-Core CPU and 64 GB RAM running Ubuntu 24.04 LTS. Experiments were conducted on a standard cloud instance. Figure \ref{fig:exec-time} illustrates the runtime breakdown, and Table \ref{tab:exec-memory} details peak memory usage.

\noindent\textbf{Runtime Analysis: The Amortization of Privacy.}
As illustrated in Figure \ref{fig:exec-time}, the computational cost is heavily skewed toward the one-time initialization phase, allowing the iterative steps to proceed efficiently.
\begin{itemize}[noitemsep, topsep=0pt,parsep=0pt,partopsep=0pt, leftmargin=*]
    \item \textbf{Initialization (Compute Step):} The \textsc{Compute} phase is the most computationally intensive, ranging from approximately 5 minutes (COMPAS) to 12 minutes (Cancer). Crucially, this is a \textit{one-time cost}. Because the generation of 1- and 2-way marginals depends only on the raw data and the workload $\mathcal{W}$, it is executed once prior to the iterative training loop. As expected, the runtime is identical for both L1 and L2 variants, as the choice of error metric does not influence the initial marginal computation.
    
    \item \textbf{Iterative Training (Select \& Measure):} The training loop is dominated by the \textsc{Select} step. The $\fhaim$-L2 protocol incurs a modest computational overhead compared to $\fhaim$-L1 (e.g., $\sim 15\%$ increase for Cancer). While the Squared $L_2$ norm requires fewer multiplicative levels (depth 1) than the polynomial approximation of $L_1$ (depth 4 for degree-10), the operations involve larger ciphertext scales to preserve precision without rescaling, leading to slightly longer execution times. However, this trade-off is negligible in practice: an average iteration takes $\sim 1-2$ minutes. For a typical run of $T=16d$ iterations, the total training time remains within the manageable range of 15 to 45 minutes for all datasets tested.
    
    \item \textbf{The Measure Step:} The \textsc{Measure} step is extremely efficient, executing in under 2.5 seconds per iteration across all datasets. Its contribution to the total runtime is effectively negligible.
\end{itemize}

\noindent\textbf{Memory Footprint.}
Table \ref{tab:exec-memory} reports the peak memory usage during execution. $\fhaim$ demonstrates a remarkably low memory footprint, with peak usage remaining under \textbf{32 MB} for all datasets.
While $\fhaim$-L2 generally consumes more memory than $\fhaim$-L1 (e.g., 31.72 MB vs. 11.12 MB for COMPAS), this increase is an artifact of the ciphertext maintenance required for the squared norm operations. Nevertheless, these requirements are orders of magnitude lower than the available RAM on standard commercial hardware, confirming that $\fhaim$ can be deployed on lightweight computing instances without memory bottlenecks.

\begin{table}[!ht]
\small
\centering
\caption{Peak Memory Usage (MB)}
\label{tab:exec-memory}
\footnotesize{
\begin{tabular}{lcc}
\toprule
Dataset & $\fhaim$-L1 & $\fhaim$-L2 \\
\midrule
Breast Cancer & 18.95 & 26.28 \\
COMPAS & 11.12 & 31.72 \\
Diabetes & 13.72 & 26.57 \\
\bottomrule
\end{tabular}
}
\end{table}

\noindent\textbf{Complexity and Scalability.}
The use of SIMD packing ensures that our runtime scales efficiently with dataset size $N$. As discussed in Section \ref{sec:prot}, the multiplicative depth of our circuit depends only on the marginal degree $k$, not $N$. Consequently, increasing the number of records primarily affects the number of ciphertexts processed in parallel, which scales linearly $O(N/L)$ rather than super-linearly. This confirms that $\fhaim$ is scalable to larger datasets without requiring exponential increases in compute resources.

\begin{figure}[!ht]
    \centering
    \includegraphics[width=\linewidth]{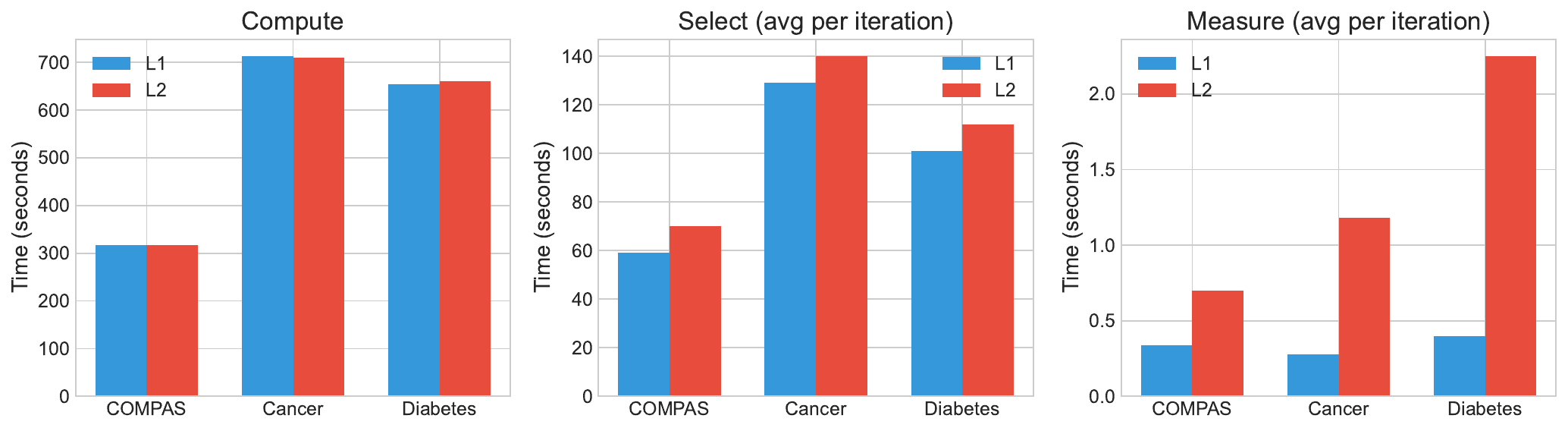}
    \caption{\textbf{Runtime Analysis.} Breakdown of execution times (in seconds) for the Compute, Select, and Measure protocols across datasets. The one-time \textbf{Compute} phase (left) dominates the total runtime and is invariant to the choice of norm. In the iterative \textbf{Select} (center) and \textbf{Measure} (right) phases, the exact Squared $L_2$ protocol (red) incurs a modest computational overhead compared to the degree-10 $L_1$ approximation (blue), reflecting the trade-off between the raw speed of low-degree approximations and the numerical stability of the exact squared norm.}
    \label{fig:exec-time}
\end{figure}



%
%
\section{Conclusion}\label{sec:conc}
We introduced $\fhaim$, the first framework enabling input‑private synthetic data generation by training tabular SDG models directly over fully homomorphically encrypted data. By combining novel FHE protocols for marginal computation, DP noise injection, and encrypted query selection, $\fhaim$ demonstrates that both input privacy and strong output privacy can be achieved within an efficient SDG‑as‑a‑service workflow. Our experiments show that $\fhaim$ preserves the statistical utility and downstream ML performance of the marginal-based AIM SDG algorithm, especially under the more stable squared‑$L_2$ objective, while maintaining practical runtimes on real datasets. These results establish FHE‑based SDG as a viable and scalable primitive for privacy‑preserving data sharing, and open new avenues for secure outsourcing of machine‑learning workflows where both data confidentiality and formal privacy guarantees are essential. 

\section{Acknowledgements}
This material is based upon work supported by the National Science Foundation under Grant Nos.~2451163, 2523406, CCF-2523407, and CNS-2413232, and by NSF NAIRR 240485 (Cloudbank AWS) and NSF NAIRR 240091 (TACC Frontera). This research was, in part, funded by the National Institutes of Health (NIH) Agreement No. 1OT2OD032581. The views and conclusions contained in this document are those of the authors and should not be interpreted as representing the official policies, either expressed or implied, of the NIH.

\section*{Impact Statement}
This work aims to advance privacy-preserving machine learning by enabling synthetic data generation directly on encrypted data. The primary societal benefit is facilitating the safe sharing of sensitive information in critical domains -- such as healthcare and finance -- without requiring data holders to trust a central server. We acknowledge that the use of Fully Homomorphic Encryption introduces a significant computational overhead compared to plaintext approaches, resulting in a higher energy footprint. We do not foresee other immediate negative societal consequences beyond these resource considerations.

\bibliography{ref}
\bibliographystyle{icml2025}

\newpage
\appendix
\onecolumn

\section{FHE primitives}\label{app:fhe}
Homomorphic encryption is a cryptographic method that enables data processing while the information remains in an encrypted state. Because the raw data is never exposed during computation, this technique is essential for secure cloud computing and third-party data processing.

\noindent
Formally, the scheme consists of a tuple of four probabilistic polynomial-time algorithms $\Pi_{\text{HE}} = (\text{KeyGen}, \text{Enc}, \text{Eval}, \text{Dec})$ defined as follows:

\begin{itemize}
    \item $\text{HE.KeyGen}(\lambda)$: 
    Given the security parameter $\lambda$ (which determines the polynomial degree $N$ and modulus $Q$), the key generation algorithm outputs a key pair $(pk, sk)$ and the related cryptographic context. 
    Crucially for our $\fhaim$ implementation, this context includes auxiliary \textit{evaluation keys} (such as relinearization and rotation keys) required to perform homomorphic multiplication and SIMD slot rotations efficiently.

    \item $\text{HE.Enc}(pk, m)$: 
    The encryption algorithm takes in the public key $pk$ and a plaintext message $m$, then outputs the ciphertext $c$. 
    In the CKKS scheme used here, $m$ is typically a vector of real numbers encoded into the slots of a polynomial ring. 
    For our purposes, $m$ corresponds to the flattened vectors of one-hot encoded attributes or pre-generated noise samples.

    \item $\text{HE.Eval}(c, f)$: 
    The encrypted evaluation algorithm takes in a ciphertext message $c$ and a function $f$ (represented as an arithmetic circuit), then outputs the computation result $c'$. 
    Here, $f$ encompasses the arithmetic operations required for marginal computation (additions and multiplications) and the squared $L_2$ error metric evaluation.

    \item $\text{HE.Dec}(sk, c')$: 
    The decryption algorithm takes in the secret key $sk$ and a ciphertext message $c'$, then outputs the plaintext $m'$. 
    Due to the approximate arithmetic of CKKS, the recovered message satisfies $m' \approx f(m)$, containing a negligible error term provided the accumulated noise remains within the decryption radius.

    \item $\text{HE.Add}(c_1, c_2)$: 
    The homomorphic addition algorithm takes two ciphertexts $c_1$ and $c_2$ (encrypting $m_1$ and $m_2$) and outputs a ciphertext $c_{\text{add}}$ such that $\text{HE.Dec}(sk, c_{\text{add}}) \approx m_1 + m_2$. 
    This operation is computationally inexpensive and consumes negligible multiplicative depth. 
    In Protocol \ref{prot:comp_he}, we utilize this extensively to aggregate one-hot encoded counts for 1-way marginals.

    \item $\text{HE.Mult}(c_1, c_2)$: 
    The homomorphic multiplication algorithm outputs a ciphertext $c_{\text{mult}}$ encrypting the element-wise product $m_1 \cdot m_2$. 
    This operation significantly increases the noise level and consumes one level of multiplicative depth. 
    It is typically followed by a \textit{relinearization} step (using evaluation keys) to prevent ciphertext size expansion. 
    We employ this primitive for computing 2-way marginal interactions and for the squaring operation in the $L_2$ quality score calculation.

    \item $\text{HE.Rot}(c, k)$: 
    Given a ciphertext $c$ encrypting a vector $m$, the rotation algorithm outputs a ciphertext $c_{\text{rot}}$ encrypting a cyclically shifted vector $m'$ where $m'[i] = m[(i+k) \pmod{L}]$. 
    This operation requires specific rotation keys generated during $\text{HE.KeyGen}$. 
    In Protocol \ref{prot:select}, this primitive is essential for the \textsc{Combine} sub-protocol, allowing us to move scalar values into specific SIMD slots to pack error vectors efficiently.
\end{itemize}

\paragraph{$\pMSR$: FHE Protocol for \textbf{MEASURE} }
For each bin $j \in \{1,2,\ldots,|\Omega_w|\}$, the CE scales a pre-encrypted unit Gaussian noise sample $[\![z_j]\!]$ by $\sigma$ and adds it to the encrypted marginal entry $[\![q_w[j]\!]$. The resulting noised ciphertext is then sent to the CSE for decryption. Since the DP noise is added before decryption, the CSE only ever observes the already-privatized marginal $\hat{q}_{w^*}$, ensuring that the true marginal is never revealed in the clear to any entity.

\begin{myprotocol}
   \caption{$\pMSR$: FHE Protocol for \textbf{MEASURE}}
   \label{prot:measure_he}
    \textbf{Input:} Selected query $w^*$, 
    encrypted marginal $[\![q_{w^*}]\!]$, 
    noise scale $\sigma$, 
    encrypted Gaussian noise buffer $[\![\mathbf{z}]\!]$, 
    counter $c$ \\
    \textbf{Output:} Noised marginal $\tilde{q}_{w^*}$
   \begin{algorithmic}[1]
   \STATE \textit{//number of attributes in $w$}
   \STATE $m \leftarrow |{w^*}|$
   \FOR{$j \leftarrow 1$ to $\omega_{w^*}$}
       \STATE $[\![\tilde{q}_{w^*}[j]]\!] \leftarrow [\![q_{w^*}[j]]\!] + \sigma \cdot [\![z[c + j]]\!]$
       
       \STATE $\tilde{q}_{w^*}[j] \leftarrow \textsc{Dec}([\![\tilde{q}_{w^*}[j]]\!])$
   \ENDFOR
   \STATE $c \leftarrow c + m$ 
   \STATE \textbf{return} $\tilde{q}_{w^*}$
   \end{algorithmic}
   
\end{myprotocol}

\paragraph{FHE Protocol for Combine ($\pCOMB$).}\label{prot:combine_desc}
The combine sub-protocol (Protocol \ref{prot:combine}) packs $k$ scalar ciphertexts — each holding a value in slot 0 — into a single ciphertext vector where position $i$ holds the $i$-th value. This is done using a pre-computed selector plaintext $\mathbf{s}$ with $\mathbf{s}[0]=1$ and all other entries 0. For each scalar ciphertext $[v_i]$, the CE multiplies it by $\mathbf{s}$ to isolate the value in slot 0, then applies a homomorphic rotation by $i$ to shift it into position $i$, and accumulates the result. This sub-protocol is used in both $\pi_{\text{COMP}}$ to pack marginal entries and in $\pi_{\text{SELECT}}$ to pack the error vector for norm computation.
\begin{myprotocol}[!ht]
   \caption{$\pCOMB$:FHE Sub-Protocol for \textbf{COMBINE}}
   \label{prot:combine}
    \textbf{Input:} Scalar ciphertexts $[\![v_1]\!], \ldots, [\![v_k]\!]$, selector plaintext $\mathbf{s}$ with $\mathbf{s}[0] = 1$ and $\mathbf{s}[j] = 0$ for $j > 0$ \\
    \textbf{Output:} Packed ciphertext $[\![\mathbf{v}]\!]$ with $\mathbf{v}[i] = v_i$
   \begin{algorithmic}[1]
   \STATE $[\![\mathbf{v}]\!] \leftarrow \textsc{Enc}(\mathbf{0})$
   \FOR{$i = 0$ to $k-1$}
       \STATE $[\![\mathbf{v}]\!] \leftarrow [\![\mathbf{v}]\!] + \textsc{Rot}(\mathbf{s} \cdot [\![v_i]\!],\; i)$
   \ENDFOR
   \STATE \textbf{return} $[\![\mathbf{v}]\!]$
   \end{algorithmic}
\end{myprotocol}

\paragraph{Note on one-hot encoding (OHE)}
To compute marginals efficiently in FHE, we one-hot encode (OHE) $D$. Each attribute $x_i$ with domain size $\omega_i$ is transformed into a binary vector of length $\omega_i$ where only the index corresponding to the attribute value is set to 1.
In Figure \ref{fig:2-way}, we consider two attributes 'Age' and 'Gender' with $\Omega_{Age} = \{10, 20, 30\}$ and $\Omega_{Gender} = \{M, F\}$. The OHE data then is tranformed to a matrix of size $N \times  (|\Omega_{Age}| +  |\Omega_{Gender}|)$.

\begin{figure}[!ht]
    \centering
    \includegraphics[width=0.7\textwidth]{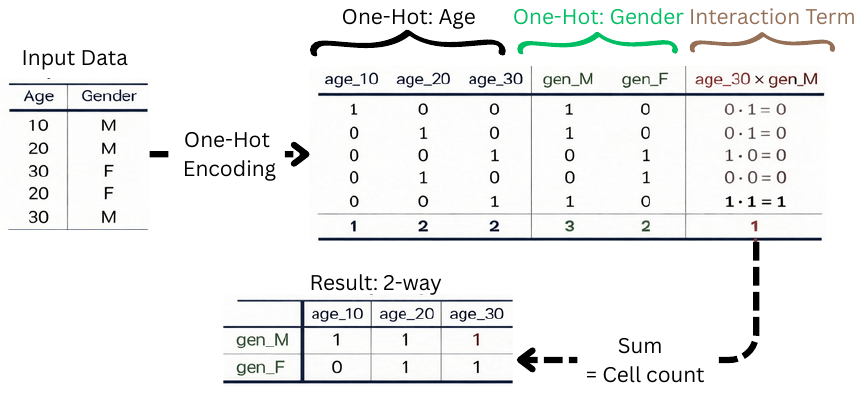}
    \caption{Illustration of 2-way marginal computation}
    \label{fig:2-way}
\end{figure}

To compute 1-way marginal for 'Age' attribute, we need to add each entry in the OHE'd column (as shown in the last row of top right table) and there are 3 such columns. In general, computing the 1-way marginal for attribute $x_i$ with domain $\omega_i$ requires $\omega_i \cdot (N-1)$ additions.

Computing a 2-way marginal is illustrated for computing one cell corresponding to {age\_30, gen\_M} in the resulting 2-way marginal for {'Age', 'Gender'}.  The corresponding to {age\_30, gen\_M} columns are multiplied resulting in a new column shown by the interaction term and then the values are added up. In general, computing the 2-way marginal for attribute pair $(x_i,x_j)$ requires $\omega_i \cdot \omega_j \cdot N$ multiplications to determine the intersection of one-hot encoded features for each record, followed by $\omega_i \cdot \omega_j \cdot (N-1)$ additions to aggregate these counts\footnote{Using comparisons, we would  need $\omega_i \cdot N$ equality checks and $\omega_i \cdot N$ additions for computing a 1-way marginal.} 

 Generalizing this, computing a $k$-way marginal for a subset of attributes $\mathcal{C}$ requires $N(k-1)\prod_{i \in \mathcal{C}} \omega_i$  multiplications and $(N-1)\prod_{i \in \mathcal{C}} \omega_i$ additions.

\section{Noise Generation and DP-in-FHE}\label{app:noise}
A central challenge in DP-in-FHE is generating the randomness required to sample DP noise. We adopt a pre-sampling strategy where DE encrypts and uploads a sufficient collection of unit noise samples during the initial upload phase to support the full AIM execution. The CE then homomorphically scales this noise based on the specific privacy budget $\varepsilon$ and the global sensitivity of the queries. To allow the DE to remain offline and to preserve the client–server deployment model, we derive data-independent upper bounds on the number of required noise samples. The total noise requirement follows directly from the two phases of AIM: (i) the initialization phase, which measures all 1-way marginals (Algorithm 2 in \cite{mckenna2022aim}), and (ii) the iterative select–measure–generate loop (Algorithm 3 in \cite{mckenna2022aim}). Following AIM, we upper bound the number of selection rounds by $T = 16d$.
\begin{itemize}[noitemsep, topsep=0pt]
    \item \textit{Gaussian Noise Samples (Measure)} During initialization, all 1-way marginals are measured once. Furthermore, in each of the $T$ rounds, a single selected marginal is measured. To obtain a data-independent bound on the amount of noise samples required, we assume that each selected marginal has the maximum possible domain size. The total number of required Gaussian noise samples is therefore
    \begin{equation}
    \left( \sum\limits_{i = 1}^d 
    |\omega_i| \right) + \left(16d \cdot \max_{w \in \mathcal{W}} |\omega_w|\right)
    \end{equation}

    \item \textit{Gumbel Noise Samples (Select)}: In each of the $T$ rounds, DP noise is added to the candidate workload $\mathcal{C} \subseteq \mathcal{W}$ during query selection. We conservatively bound the total number of required Gumbel noise samples by
    \begin{equation}
    16d \cdot |\mathcal{W}|
    \end{equation}

\end{itemize}

While this adds a one-time computational burden on the DE, it is a necessary requirement as it is challenging 
for CE to generate secret randomness within the FHE domain. 
Note that the CE cannot add 
its own known plaintext noise, as it then could easily subtract that noise after the CSE decrypts the result, thereby exposing the raw data. By providing encrypted noise, the DE ensures that the statistics are privatized via ``DP-in-FHE'' before they are ever observed in plaintext by the service provider.

The noise samples are chunked and packed into ciphertexts for efficient summation with marginals as needed.

\section{Polynimial approximation of $|x|$}
The L1-norm requires approximating the absolute value function, which has no closed-form polynomial representation. We approximate $|x|$ over $[-1, 1]$ using a degree-10 least-squares polynomial:
\begin{multline}
    p(x) = 0.0556 + 3.6049x^2 - 11.9929x^4 + 24.4175x^6 \\
    \quad - 23.6236x^8 + 8.5577x^{10}
\end{multline}
To extend the approximation to an arbitrary range $[-\alpha, \alpha]$, we evaluate $\alpha \cdot p(x/\alpha)$. Value of $\alpha$ used is 10000.

\section{Optimizations for Select Step}\label{app:select}
We propose the following score function to use squared $L_2$ norm.

\begin{equation}\label{eqn:l2_score_app}
s^t(w, D) =  \alpha_w \left(\left\| q_w(D) - q_w(\hat{D}_{t-1}) \right\|_2^2 - \sigma_t^2 \omega_w\right)
\end{equation}

\begin{theorem}[Sensitivity of Squared $L_2$ Quality Score]\label{thm:sens}
Let $q_w(D)$ be a marginal and $\hat{q}_w$ be the estimated marginal from the graphical model. 
Define the quality score as $s(w,D) = \alpha_w \left(\left\| q_w(D) - q_w(\hat{D}) \right\|_2^2 - \rho\right)$, where  
Under unbounded differential privacy, the global sensitivity of $s(w,.)$ is $\Delta s(w, .) = |\alpha_w|(2N + 1)$, where $N$ is the number of records, $\rho$ is penalty term and $\alpha_w$ is the weight assigned to $w$.
\end{theorem}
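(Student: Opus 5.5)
The plan is to fix neighbouring datasets $D \sim D'$ under unbounded DP, where $D'$ is $D$ with one record $x$ added (the removal case is symmetric). We treat the estimate $\hat{q}_w = q_w(\hat{D})$ as fixed, since it depends on the data only through earlier DP outputs. Set $e = q_w(D) - \hat{q}_w$ and $e' = q_w(D') - \hat{q}_w$. Adding $x$ increments exactly one bin of the marginal, namely the bin $y = x_w$, so $e' = e + \mathbf{1}_y$, where $\mathbf{1}_y$ is the indicator vector of that bin. The penalty $\rho$ and the weight $\alpha_w$ do not depend on the data, so they cancel in the difference:
\begin{equation*}
s(w,D') - s(w,D) = \alpha_w\left(\|e+\mathbf{1}_y\|_2^2 - \|e\|_2^2\right) = \alpha_w\left(2e_y + 1\right).
\end{equation*}
Hence $|s(w,D')-s(w,D)| = |\alpha_w|\,|2e_y+1|$, and the task reduces to bounding $|2e_y+1|$ uniformly over bins and datasets.

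The key step is to bound the single error entry $e_y = q_w(D)_y - (\hat{q}_w)_y$. The true count satisfies $0 \le q_w(D)_y \le N$. For the estimate, I would use that $\hat{q}_w$ is the marginal of a synthetic distribution scaled to the (public or estimated) total $N$, so $0 \le (\hat{q}_w)_y \le N$. It follows that $e_y \in [-N, N]$ and $2e_y + 1 \in [-2N+1,\, 2N+1]$, which gives $|2e_y+1| \le 2N+1$. In the removal direction, with $e' = e - \mathbf{1}_y$ and the removed record lying in a bin with $q_w(D)_y \ge 1$, the difference equals $\alpha_w(-2e_y+1)$, and the same range bound gives at most $2N+1$ in absolute value. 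Taking the maximum over both directions, all bins and all datasets yields $\Delta s(w,\cdot) \le |\alpha_w|(2N+1)$.

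The main obstacle is this boundedness step, because $N$ is itself data-dependent under add/remove neighbours. I would handle it by letting $N$ denote the maximum dataset size, or equivalently the larger of the two neighbouring sizes, as is implicit in the statement. I would also note explicitly that this requires the graphical-model estimate to be a nonnegative count vector with total mass $N$. For tightness, I would exhibit a configuration with $(\hat{q}_w)_y = 0$ and all $N$ records in bin $y$; this attains $e_y = N$, so the difference is $|\alpha_w|(2N+1)$ and the bound is exact. Finally, the overall sensitivity used in $\pSEL$ is the maximum over $w \in \mathcal{W}$, as stated in the main text.
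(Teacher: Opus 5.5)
Your proof is correct and follows the paper's argument: one added record increments a single bin, the difference of squared norms collapses to $\alpha_w(2e_y+1)$, and the bound $|e_y|\le N$ gives $|\alpha_w|(2N+1)$; your removal case, the explicit assumption $0\le(\hat q_w)_y\le N$, and the remark that $\hat q_w$ depends on the data only through earlier DP outputs add justification the paper leaves implicit. One small inconsistency is in your tightness example, where all $N$ records of $D$ lie in bin $y$ and one more is then added: exactness needs $N$ to be the size of the smaller neighbour, because under your stated convention ($N$ is the larger size) the addition case only gives $e_y\le N-1$, i.e.\ a bound of $|\alpha_w|(2N-1)$, so the upper bound stands but you should fix $N=|D|$ before the addition.
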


\begin{proof}
Let $D$ and $D'$ be neighboring datasets such that $D' = D \cup \{z\}$, where $z$ is a single record. 
Adding $z$ increments exactly one bin in the marginal $q_w(D)$. 
Let $x$ be the count of that bin in $D$, and $\hat{x}$ be the corresponding estimate in $q_w(\hat{D})$. 
The change in the quality score is ($\rho$ cancels out):
\begin{align*}
|s(w,D') - w(w,D)| &= |\alpha_w| \cdot |((x+1) - \hat{x})^2 - (x - \hat{x})^2| \\
&= |\alpha_w| \cdot |(x - \hat{x} + 1)^2 - (x - \hat{x})^2| \\
&= |\alpha_w| \cdot |(x - \hat{x})^2 + 2(x - \hat{x}) + 1 - (x - \hat{x})^2| \\
&= |\alpha_w| \cdot |2(x - \hat{x}) + 1|
\end{align*}
In the worst case, the error $|x - \hat{x}|$ is bounded by the total number of records $N$. 
Thus, the global sensitivity is $\Delta s(w, D) = |\alpha_w|(2N + 1)$. 
\end{proof} 

To apply the exponential mechanism to select a candidate, we use $\Delta  s(w, D) = \max_{w \in \mathcal{W}} \alpha_w(2N + 1)$. The privacy analysis follows zCDP compositions and does not change. The only change here is the sensitivity parameter and not the composition or analysis.

\begin{theorem}[Expected Squared $L_2$ Penalty]\label{thm:bias}
Let $\mathbf{y} = q_w(D) + \mathbf{\eta}$ be a noisy measurement where $\mathbf{\eta} \sim \mathcal{N}(0, \sigma^2 \mathbf{I}_{\omega_w})$. 
The expected squared $L_2$ error introduced solely by the Gaussian noise is $\mathbb{E}[\|\mathbf{\eta}\|_2^2] = \omega_w\sigma^2$ 
\end{theorem}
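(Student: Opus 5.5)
The plan is to expand the squared norm coordinatewise and apply linearity of expectation. Write $\|\boldsymbol{\eta}\|_2^2 = \sum_{j=1}^{\omega_w} \eta_j^2$, where $\eta_j$ is the $j$-th coordinate of the noise vector. Because the covariance is $\sigma^2 \mathbf{I}_{\omega_w}$, each coordinate is marginally distributed as $\eta_j \sim \mathcal{N}(0,\sigma^2)$. Linearity then gives $\mathbb{E}[\|\boldsymbol{\eta}\|_2^2] = \sum_j \mathbb{E}[\eta_j^2]$, and this step does not even need independence.

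Next I compute each term. For a centered variable, $\mathbb{E}[\eta_j^2] = \mathrm{Var}(\eta_j) + (\mathbb{E}[\eta_j])^2 = \sigma^2 + 0$. Summing over the $\omega_w$ bins gives $\omega_w\sigma^2$. An equivalent route is to note that $\|\boldsymbol{\eta}\|_2^2/\sigma^2 \sim \chi^2_{\omega_w}$, whose mean is $\omega_w$. I would also remark that the length of $\boldsymbol{\eta}$ is the number of bins $|\Omega_w|$, which the paper writes as $\omega_w$ in this context.

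Finally, I would connect the result to its use in the select step. If the measured marginal is $\mathbf{y} = q_w(D) + \boldsymbol{\eta}$, then $\mathbb{E}\|\mathbf{y} - q_w(D)\|_2^2 = \omega_w\sigma^2$. So a marginal measured at noise level $\sigma_t$ is expected to show squared-$L_2$ error $\sigma_t^2\omega_w$ purely from noise. Subtracting this as $\rho$ in \eqref{eqn:l2_score_app} therefore centers the score, which is the squared-norm analogue of AIM's $L_1$ penalty $\sqrt{2/\pi}\,\sigma\,\omega_w$.

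There is no real obstacle here. The only point needing care is stating clearly that the $\omega_w$ in the statement means the dimension of the marginal vector.
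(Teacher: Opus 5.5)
Your proposal is correct and matches the paper's proof essentially step for step: expand $\|\mathbf{\eta}\|_2^2=\sum_j \eta_j^2$, apply linearity of expectation, use $\mathbb{E}[\eta_j^2]=\mathrm{Var}(\eta_j)+(\mathbb{E}[\eta_j])^2=\sigma^2$, and sum over the $\omega_w$ bins. Your $\chi^2_{\omega_w}$ remark and your link to the penalty term $\rho$ in the select score are consistent extras; the paper also states the latter at the end of its proof.
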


\begin{proof}

The squared $L_2$ norm of the noise vector $\mathbf{\eta}$ is the sum of the squares of its individual components:
\begin{equation*}
\|\mathbf{\eta}\|_2^2 = \sum_{i=1}^{\omega_w} \eta_i^2
\end{equation*}

By the linearity of expectation, the expected value of a sum is equal to the sum of the expected values:
\begin{equation*}
\mathbb{E}[\|\mathbf{\eta}\|_2^2] = \mathbb{E}\left[\sum_{i=1}^{\omega_w} \eta_i^2\right] = \sum_{i=1}^{\omega_w} \mathbb{E}[\eta_i^2]
\end{equation*}

The expectation of a squared random variable, $\mathbb{E}[X^2]$ is given by:
\begin{align*}
\text{Var}(X) = \mathbb{E}[X^2] - (\mathbb{E}[X])^2 \\
\mathbb{E}[X^2] = \text{Var}(X) + (\mathbb{E}[X])^2
\end{align*}

For Gaussian noise $\eta_i \sim \mathcal{N}(0, \sigma^2)$, the mean ($\mathbb{E}[\eta_i]$) is $0$ and the variance ($\text{Var}(\eta_i)$) is $\sigma^2$. Substituting these values into the identity:
\begin{align*}
\mathbb{E}[\eta_i^2] &= \sigma^2 + (0)^2 \\
\mathbb{E}[\eta_i^2] &= \sigma^2
\end{align*}

Summing over all $\omega_w$ bins, we obtain:
\begin{equation*}
\mathbb{E}[\|\mathbf{\eta}\|_2^2] = \sum_{i=1}^{\omega_w} \sigma^2 = \omega_w\sigma^2
\end{equation*}

This constant $\sigma^2\omega_w$ is the penalty term in the quality score $s(w,D)$ in Equation \ref{eqn:l2_score_sel}.
\end{proof}

\begin{table*}
\centering
\caption{\textbf{Utility of Synthetic Data.} AIM-L1 is the original AIM algorithm}
\label{tab:utility}
\footnotesize
\setlength{\tabcolsep}{3pt}
\resizebox{\linewidth}{!}{%
\begin{tabular}{@{}llccccccccc@{}}
\toprule
& & \multicolumn{3}{c}{\textsc{Cancer}} & \multicolumn{3}{c}{\textsc{Compas}} & \multicolumn{3}{c}{\textsc{Diabetes}} \\
& & \multicolumn{3}{c}{$N=228, d=10, |\Omega|=598{,}752$} & \multicolumn{3}{c}{$N=4120, d=7, |\Omega|=5832$} & \multicolumn{3}{c}{$N=614, d=9, |\Omega|=375{,}000$} \\
\cmidrule(lr){3-5} \cmidrule(lr){6-8} \cmidrule(lr){9-11}
\multicolumn{2}{l}{Method} & $\Delta$ & Acc. & F1 & $\Delta$ & Acc. & F1 & $\Delta$ & Acc. & F1 \\
\midrule
\multicolumn{2}{l}{Real data} & -- & 0.702 & 0.585 & -- & 0.635 & 0.625 & -- & 0.708 & 0.602 \\
\midrule
& AIM-L1 & 0.057 & 0.680 & 0.318 & 0.013 & 0.622 & 0.605 & 0.297 & 0.708 & 0.582 \\
& $\fhaim$-L1 & 0.105 &  0.693 & 0.104 & 0.031 & 0.544 & 0.139 & 0.321 & 0.708 & 0.491 \\
& AIM-L2 & 0.058 & 0.684 & 0.284 & 0.014 & 0.625 & 0.610 & 0.300 & 0.699 & 0.577 \\
\multirow{-4}{*}{\rotatebox[origin=c]{90}{Synthetic}} 
\multirow{-4}{*}{\rotatebox[origin=c]{90}{$\varepsilon=\infty$}} & $\fhaim$-L2 & 0.058 & 0.684 & 0.321 & 0.014 & 0.625 & 0.610 & 0.298 & 0.701 & 0.583 \\
\midrule
& AIM-L1 & 0.415 & 0.456 & 0.338 & 0.019 & 0.625 & 0.628 & 0.361 & 0.601 & 0.299 \\
& $\fhaim$-L1 & 0.475 & 0.591 & 0.349 & 0.035 & 0.532 & 0.504 & 0.382 & 0.535 & 0.355 \\
& AIM-L2 & 0.441 & 0.614 & 0.307 & 0.028 & 0.580 & 0.597 & 0.435 & 0.519 & 0.278 \\
\multirow{-4}{*}{\rotatebox[origin=c]{90}{Synthetic}} 
\multirow{-4}{*}{\rotatebox[origin=c]{90}{$\varepsilon=1$}} 
& $\fhaim$-L2 & 0.476 & 0.616 & 0.495 & 0.028 & 0.575 & 0.620 & 0.444 & 0.549 & 0.218 \\
\bottomrule
\end{tabular}%
}
\end{table*}

\begin{table}[]
    \centering
    \footnotesize{
    \begin{tabular}{l l r r r}
    \toprule
         &  Subprot. & \textsc{Cancer} & \textsc{Compas} & \textsc{Diabetes}\\
    \midrule
        $\pCOMP$ & $\pOM$ & 31s & 25s & 32s \\
                  & $\pTM$ & 681s & 292s & 625s \\
    \midrule
        $\pSEL$  & $\pERR$ & 129/140s & 59/70s & 101/112s \\
                & $\pGUMBEL$ & 4.8s & 3.9s & 4.0s \\

    \midrule
        $\pMSR$ & $\pGAUSS$ & 0.28/1.18s & 0.34/0.70s & 0.40/2.25s \\
    \bottomrule
    \end{tabular}
    }
    \caption{\textbf{Computational Cost}: $\fhaim$ Runtimes (L1/L2)}
    \label{tab:runtimes}
\end{table}

\section{Data, Modelling and Parameters}\label{app:details}
\subsection{Data}\label{app:data}
The breast-cancer dataset has 10 categorical attributes and 285 samples. 
The COMPAS data consists of categorical data. We utilize the same version as 
in \cite{calmon2017optimized}, which consists of 7 categorical features and 7,214 samples.
The diabetes dataset has 9 continuous attributes and 768 samples. 

\subsection{Modelling} To test the utility of the generated data, we train logistic regression models. For breast-cancer, the task is to predict if the cancer will recur. For COMPAS, the task is to predict whether a criminal defendant will re-offend. For the diabetes dataset, the task is to classify a patient as diabetic. We train ML models on real train data and synthetic data independantly. We use a random seed of 42 to train the models. For AIM, we use default parameters, except for $\epsilon$. We note that the original AIM algorithm itself takes too long to run for large datasets (as also reported in \cite{chen2025benchmarking}).




\end{document}